%% file: main.tex
\documentclass[conference]{IEEEtran}
\usepackage[T1]{fontenc}
\usepackage{amsmath,amssymb}
\usepackage{graphicx}
\usepackage{booktabs}
\usepackage{xcolor}
\usepackage{tikz}
\usetikzlibrary{arrows.meta,positioning,fit,backgrounds}
\usepackage[hidelinks]{hyperref}
\usepackage{cite}
\newtheorem{proposition}{Proposition}

\input{generated/numbers.tex}

\begin{document}

\title{Fast Path, Exact Path: One Library of Hardwired Operators for Streaming and Exact Canny Edge Detection on a Low-Cost FPGA}

\author{\IEEEauthorblockN{Raj Mali}
\IEEEauthorblockA{\textit{Independent Researcher}\\
{raj.mali.research@gmail.com}\\
ORCID: \href{https://orcid.org/0009-0001-6351-0062}{0009-0001-6351-0062}}}

\maketitle

\begin{abstract}
Hysteresis edge tracking, the final stage of Canny edge detection, follows edge chains of unbounded length and therefore resists single-pass streaming; FPGA designs either approximate it or dedicate separate labelling hardware to it. We show that a single library of hardwired $3\times3$ operators, each with an identical two-clock latency, delivers both a camera-rate approximation and exact hysteresis on a low-cost Artix-7 (XC7A35T). The \emph{fast path} streams camera pixels through the operators and $K$ local hysteresis stages and commits each edge decision \latKOne--\latKEight~$\mu$s after the pixel arrives ($K=1$--$8$; system simulation with a datasheet camera model), that is, $3+K$ camera lines later and \soonerKEight--\soonerKOne$\times$ sooner than a frame-based processor can even begin at 30 frames per second. It provably never reports an edge that exact hysteresis rejects, recovers \recallPhotoKOne--\recallPhotoKEight\,\% of exact edges at thresholds 20/40, and on the \bsdsN{} BSDS500 test images retains at least \bsdsFastMinPctOfExact\,\% of the exact path's boundary F-measure (ODS \bsdsFastMinF--\bsdsFastMaxF{} versus \bsdsExactF). The \emph{exact path} holds $128\times128$ tiles in block RAM, is exact within each tile, and replaces two-buffer iteration with in-place forward and reverse sweeps built into the same operator. This reduces hysteresis from \structTwoBufMin--\structTwoBufMax{} passes to \structInplaceMin--\structInplaceMax{} on pattern and photo tiles, accelerating the hysteresis step \hystSpeedMin--\hystSpeedMax$\times$ (\hystSpeedWorst$\times$ on a worst-case tile) and a complete tile \tileSpeedMin--\tileSpeedMax$\times$, with identical output. Both paths are bit-exact with software golden models at every pipeline stage and close timing at 50~MHz in at most \fastLut{} LUTs. An optimised multi-core CPU remains faster on stored images, but it cannot begin until the frame is complete.
\end{abstract}

\begin{IEEEkeywords}
FPGA, Canny edge detection, hysteresis thresholding, streaming architecture, bit-exact verification
\end{IEEEkeywords}

\section{Introduction}
Edge detection is the entry point of many vision systems, from inspection lines to robotic grasping, and Canny's detector~\cite{canny1986} remains the reference because it yields thin, connected edges with few false responses. In embedded and industrial settings, however, the decisive figure is often not throughput but latency: \emph{when} a decision becomes available, and how predictably.

Four of Canny's five stages, namely Gaussian smoothing, Sobel gradients~\cite{sobel1968}, non-maximum suppression (NMS) and double thresholding, are local $3\times3$ operations that map naturally onto a line-buffered FPGA pipeline. The fifth, hysteresis, is not: a weak pixel is an edge only if a chain of weak pixels links it to a strong one, however long that chain. Software follows chains with a stack~\cite{bradski2000,moore2011}, which presupposes random access. FPGA designs therefore settle for one of three compromises. They approximate hysteresis within a fixed window, which is fast but inexact; they add a connected-component labelling unit, which is exact but separate hardware~\cite{alnajjar2010,jang2019}; or they store the frame and iterate a local rule to convergence, which is exact and reuses the local operator but is slow, because a chain grows by only one pixel per full-frame pass (up to \twoBufRealMax{} passes on a $128\times128$ photo tile in our design).

CPU comparisons demand equal rigour, since weak baselines have inflated many reported speed-ups~\cite{lee2010}. An optimised library on a modern CPU processes a stored $480\times720$ frame in \cvAllMed~ms (Section~\ref{sec:cpu}), which no one-pixel-per-clock pipeline at 50~MHz can match. The FPGA's advantage lies elsewhere: it can decide before the frame is complete, at a latency fixed by its structure.

We argue that speed and exactness need not be engineered separately. A single library of hardwired operators serves two architectures: a \emph{fast path} that decides within a few image lines and never adds a false edge, and an \emph{exact path} that completes the weak chains the fast path may miss. A system can act on the fast answer and refine it with the exact one, much as fast reflexes and slower deliberate processing coexist in nervous systems. Our contributions are:
\begin{enumerate}
\item A library of $3\times3$ Canny operators with a common two-clock latency that serves, unchanged except for one feedback-enable input, both a streaming fast path and a frame-buffered exact path (Fig.~\ref{fig:arch}, Section~\ref{sec:arch}).
\item In-place forward/reverse hysteresis sweeps, a known technique for morphological reconstruction~\cite{vincent1993}, built into the shared operator through a one-clock left-neighbour feedback and a row-above write-back into the window's line buffer. The output is identical to two-buffer iteration, while the hysteresis step runs \hystSpeedMin--\hystSpeedMax$\times$ faster on photo and pattern tiles (Section~\ref{sec:inplace}).
\item A verification methodology in which every pipeline stage of both paths is compared byte for byte with bit-exact software models in RTL simulation, and on hardware for the exact path (Section~\ref{sec:verif}).
\item Measurements of fast-path latency and accuracy as functions of $K$, including boundary accuracy on BSDS500, and a fair CPU comparison that reports where the CPU wins (Section~\ref{sec:results}).
\end{enumerate}

\section{Background and Related Work}
\textbf{Hysteresis in software.} Canny's hysteresis retains a weak pixel only if it is 8-connected through weak pixels to a strong one~\cite{canny1986}. OpenCV traces chains from each strong pixel with a stack, visiting each pixel roughly once~\cite{bradski2000}, and a patent application describes the same seed-and-stack scheme~\cite{moore2011}. The random access this requires suits a CPU, not a raster-order pipeline.

\textbf{Canny on FPGAs.} Early FPGA implementations stream the local stages through line buffers~\cite{neoh2004}. Xu et al.\ partition the frame into blocks processed in parallel with block-adaptive thresholds~\cite{xu2014}, and recent designs on low-cost devices report video-rate operation with modified Canny variants~\cite{guo2023}. Other implementations parallelise the pipeline for real-time operation~\cite{gentsos2010}, reduce its hardware cost~\cite{sangeetha2019}, adapt the Canny parameters to the estimated noise level~\cite{kalbasi2020}, or combine edge and corner detection at several resolutions~\cite{possa2014}. Stencil compilers such as Darkroom~\cite{hegarty2014} and vendor libraries such as Vitis Vision~\cite{vitisvision} generate line-buffered pipelines automatically, and fully pipelined, line-buffered processing at one pixel per clock with deterministic latency also underlies recent low-latency line-segment detectors~\cite{jalilvand2026}. In streaming designs, hysteresis is typically omitted or confined to a local window.

\textbf{Exact hysteresis in hardware.} Hysteresis can be made exact in a single raster pass by recasting it as connected-component analysis: edge pixels receive provisional labels, equivalent labels are merged, and a component is retained if it contains a strong pixel~\cite{alnajjar2010,jang2019}, building on single-pass connected-component architectures for FPGAs~\cite{ma2008}, which now label 4K video streams in real time~\cite{kowalczyk2021}. These designs are exact and fast, but their label and equivalence storage and merge logic stand apart from the local operators of the rest of the pipeline. A recent HLS design follows chains with an explicit stack, as software does, at the cost of a large share of the block RAM~\cite{malik2025}.

\textbf{Raster-scan propagation.} Rosenfeld and Pfaltz showed that alternating forward and backward raster sweeps, in which each pixel sees values already updated within the same sweep, propagate information across an image in a few passes~\cite{rosenfeld1966}; the idea underlies distance transforms and multi-pass connected-component labelling. Hysteresis is a binary morphological reconstruction (the weak mask reconstructed from the strong seeds), and Vincent's sequential reconstruction algorithm computes reconstruction with precisely such alternating raster and anti-raster sweeps~\cite{vincent1993}. Anacona-Mosquera et al.\ implemented sequential and hybrid grey-level reconstruction in FPGA hardware, running the anti-raster sweep through the same module with the pixel order reversed~\cite{anacona2017,anacona2020}. We do not claim the sweeps themselves. Our contribution is how they are realised for Canny: inside the \emph{same} two-clock $3\times3$ hysteresis operator as the rest of the library, with a one-clock left-neighbour feedback and decisions written back into the window's row-above line buffer, so that each pass is a single streaming pass requiring no memory beyond the existing frame buffers.

\textbf{Fair comparison.} Following~\cite{lee2010,horowitz2014,kuon2007}, our CPU baseline includes both an optimised multi-threaded library and a single-threaded port of our own algorithm, and we report the cases in which the CPU is faster.

Table~\ref{tab:design} situates both paths among these approaches. None dominates on every criterion; our point is that the last three rows, two of them new, are all built from one set of operators.

\begin{table}[!htb]
\caption{Ways to implement Canny hysteresis on an FPGA. Pass counts are for our $128\times128$ pattern and photo tiles (Table~\ref{tab:inplace}).}
\label{tab:design}
\centering\footnotesize
\setlength{\tabcolsep}{4pt}
\resizebox{\columnwidth}{!}{%
\begin{tabular}{@{}lccc@{}}
\toprule
Approach & Exact & Extra hardware & Passes \\
\midrule
Local window~\cite{neoh2004} & no & line buffers & 1, streaming \\
Labelling~\cite{alnajjar2010,jang2019} & yes & label tables, merge logic & 1 \\
Two-buffer (baseline) & yes & 2 frame buffers & \structTwoBufMin--\structTwoBufMax{} \\
\textbf{In-place (ours)} & yes & 2 frame buffers + 2 feedbacks & \structInplaceMin--\structInplaceMax{} \\
\textbf{Fast path (ours)} & no$^{\dagger}$ & 1 line buffer per stage & 1, streaming \\
\bottomrule
\multicolumn{4}{@{}l@{}}{\scriptsize $^{\dagger}$Never reports a false edge (Proposition~\ref{prop:subset}).}
\end{tabular}}
\end{table}

\section{Architecture}\label{sec:arch}
\begin{figure*}[t]
\centering
\begin{tikzpicture}[
  font=\footnotesize, >=Stealth,
  op/.style={draw, rounded corners=2pt, fill=blue!8, minimum height=6mm, minimum width=12mm, align=center},
  win/.style={draw, fill=gray!12, minimum height=6mm, minimum width=7mm, align=center},
  mem/.style={draw, fill=orange!15, minimum height=6mm, minimum width=10mm, align=center},
  io/.style={draw, dashed, minimum height=6mm, align=center}]
\node[io] (cam) {OV7670\\RGB565};
\node[op, right=3mm of cam] (gray) {gray};
\node[win, right=3mm of gray] (w1) {W};
\node[op, right=1mm of w1] (g) {gauss};
\node[win, right=3mm of g] (w2) {W};
\node[op, right=1mm of w2] (s) {sobel};
\node[win, right=3mm of s] (w3) {W};
\node[op, right=1mm of w3] (n) {nms};
\node[op, right=3mm of n] (t) {thresh};
\node[win, right=3mm of t] (w4) {W};
\node[op, right=1mm of w4] (h) {hyst\\\scriptsize $\times K$};
\node[mem, right=3mm of h] (fb) {1-bit\\frame};
\node[io, right=3mm of fb] (vga) {VGA};
\foreach \a/\b in {cam/gray,gray/w1,w1/g,g/w2,w2/s,s/w3,w3/n,n/t,t/w4,w4/h,h/fb,fb/vga} \draw[->] (\a) -- (\b);
\node[anchor=west, font=\footnotesize\bfseries] at ([yshift=6mm]cam.north west) {Fast path: streaming, decision $\approx(3+K)$ lines after the pixel};
\node[io, below=19mm of cam] (uart) {UART\\tile};
\node[mem, right=3mm of uart] (bram) {BRAM frame\\buffers};
\node[win, right=6mm of bram] (we) {W};
\node[op, right=1mm of we, minimum width=34mm] (ops) {gauss $|$ sobel $|$ nms $|$ thresh $|$ hyst};
\node[mem, right=6mm of ops] (bram2) {BRAM frame\\buffers};
\node[io, right=3mm of bram2] (uart2) {UART};
\draw[->] (uart) -- (bram);
\draw[->] (bram) -- (we);
\draw[->] (we) -- (ops);
\draw[->] (ops) -- (bram2);
\draw[->] (bram2) -- (uart2);
\draw[->] (bram2.south) -- ++(0,-3mm) -| node[pos=0.25, below, font=\scriptsize] {one raster pass per stage; hysteresis passes repeat until nothing changes} (bram.south);
\draw[->, red!70!black, thick] ([xshift=6mm]ops.north) -- ++(0,3.5mm) -| node[pos=0.25, above, font=\scriptsize] {in-place feedback: left neighbour (1 clk), row above (line buffer)} (we.north);
\node[anchor=west, font=\footnotesize\bfseries] at ([yshift=11mm]uart.north west) {Exact path: frame in BRAM, true hysteresis};
\end{tikzpicture}
\caption{One operator library, two architectures. Blue blocks are the shared $3\times3$ operators, each with two clocks of latency; W is a line-buffered $3\times3$ window. The fast path chains the operators with a window between stages; the exact path runs them as raster passes over block-RAM frame buffers.}
\label{fig:arch}
\end{figure*}

Fig.~\ref{fig:arch} shows the two architectures and the operators they share. Both are assembled from the same five operators, each consuming a $3\times3$ window; the paths differ only in how the operators are connected and where the data are held.

\subsection{Operator library}
Every operator consumes a $3\times3$ window and produces its output exactly two clocks later; validity flags, border flags and addresses travel outside the operators in matched delay lines. \texttt{gauss} computes $[1\,2\,1;2\,4\,2;1\,2\,1]/16$ with shifts and adds. \texttt{sobel} outputs $\min(255,(|G_x|+|G_y|)\gg2)$ and quantises the direction to 0, 45, 90 or 135 degrees by comparing $256\,|G_y|$ with $106\,|G_x|$ and $618\,|G_x|$ (tangents of 22.5 and 67.5 degrees in Q8), so that no divider, square root or arctangent is required. \texttt{nms} retains the centre only if it is not smaller than both neighbours along the gradient. \texttt{thresh} classifies the centre as strong (255), weak (128) or suppressed (0) with run-time thresholds $T_L/T_H$. \texttt{hyst} promotes a weak centre that has a strong 8-neighbour. Every stage forces its one-pixel frame border to zero. The two paths use the operators identically except for a single input of \texttt{hyst}, \texttt{fb\_en}, which enables the in-place feedback of Section~\ref{sec:inplace}; the fast path ties it to zero.

\subsection{Exact path}
The exact path holds a $128\times128$ tile in eight block-RAM buffers (input, blur, magnitude, direction, NMS, classes and two hysteresis buffers) and executes each stage as one raster pass through a single shared window and the operator selected for that pass. Each pass zeroes its border, and this boundary effect propagates inward by one ring per stage; a 4-pixel halo therefore makes the core of every tile identical to whole-image processing for the first four stages. Hysteresis, in contrast, can cross tile boundaries, so tiled output differs from whole-image output on \seamPctLow\,\% (20/40) and \seamPctHigh\,\% (50/100) of the pixels of our $480\times720$ test image (model). Tiles enter and leave over a 2~Mbaud UART; a command byte returns any intermediate buffer, and a status command reports the cycle count, the number of hysteresis passes and the mode of the last run.

\subsection{In-place hysteresis}\label{sec:inplace}
In the original \emph{two-buffer} design, each hysteresis pass reads one buffer and writes the other, so a promotion becomes visible only in the following pass and a chain grows by a single pixel per pass. Reordering the updates is safe because of the following property. Let $S$ be the set of strong pixels after thresholding, $W$ the set of weak pixels, and $E\subseteq W$ the weak pixels joined to $S$ by an 8-connected path that runs through $W$; Canny's output is $S\cup E$.

\begin{proposition}[Order independence]\label{prop:order}
Consider any procedure that (i) promotes a weak pixel only when one of its 8-neighbours is already strong, and (ii) stops only when no weak pixel has a strong 8-neighbour. When it stops, the strong pixels are exactly $S\cup E$.
\end{proposition}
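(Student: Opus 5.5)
The plan is to prove the two inclusions separately, writing $P$ for the set of strong pixels when the procedure stops. Throughout, the strong set only grows: pixels are promoted from weak to strong and never demoted. Every promoted pixel was weak, so at every moment the strong set is contained in $S\cup W$. A weak pixel remains weak unless it is promoted.

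\textbf{Soundness ($P\subseteq S\cup E$).} I will argue by induction on the order of promotions that every strong pixel lies in $S\cup E$ at every moment. Initially the strong set is $S$. Suppose a weak pixel $p$ is promoted. By (i), $p$ has an 8-neighbour $q$ that is already strong, and by the induction hypothesis $q\in S\cup E$. If $q\in S$, the single step $q,p$ is an 8-connected path from $S$ to $p$ through $W$. If $q\in E$, there is a path from $S$ through $W$ to $q$, and appending $p$ (which is weak) extends it to a path ending at $p$. In both cases $p\in E$.

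\textbf{Completeness ($S\cup E\subseteq P$).} Clearly $S\subseteq P$. Take $e\in E$ with a path $s=v_0,v_1,\dots,v_m=e$, where $s\in S$, each consecutive pair is 8-adjacent, and $v_1,\dots,v_m\in W$. Suppose for contradiction that some $v_i\notin P$, and choose the least such index $i$. Then $i\ge 1$, and $v_{i-1}\in P$. Since $v_i$ was never promoted, it is still weak when the procedure stops. It has the strong 8-neighbour $v_{i-1}$, which contradicts the stopping condition (ii). Hence the whole path lies in $P$, and in particular $e\in P$.

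The main subtlety is the phrase ``already strong'' in (i) when updates happen in place, as in a raster sweep where some neighbours have been updated during the current pass and others have not. The soundness induction must be indexed by the actual sequence of individual promotions, not by passes. Under that indexing it does not matter which neighbours were read before or after their own update, because any pixel that was strong when read belongs to $S\cup E$ by the hypothesis at that moment. The statement assumes the procedure stops. If termination is also wanted, it follows because each promotion strictly enlarges the strong set within the finite set $S\cup W$.
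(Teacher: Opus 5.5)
Your proof is correct and takes the same route as the paper: induction over the sequence of individual promotions gives $P\subseteq S\cup E$, and for the converse the first pixel on a path from $S$ that is not strong at the end has a strong predecessor, which contradicts (ii). The one loose point is your closing aside on termination: monotonicity bounds the number of promotions but not the number of steps of an arbitrary procedure, so you also need a rule such as the paper's ``stop after a pass that promotes nothing''; this lies outside the statement, which assumes the procedure stops.
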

\begin{IEEEproof}
By (i), each promoted pixel is adjacent to a pixel that was strong before it, so by induction on the order of promotions every strong pixel lies in $S\cup E$. Conversely, suppose some $p\in E$ is still weak when the procedure stops. Walk along its path from $S$ to $p$; the first pixel on the path that is still weak has a strong predecessor, which contradicts (ii).
\end{IEEEproof}

Both the two-buffer scheme and the in-place scheme below satisfy (i) and (ii); they therefore produce identical edge maps and differ only in the number of passes. (Both also stop at a cap of 256 passes, which no tile in Table~\ref{tab:inplace} reaches.)

In an in-place pass, each pixel sees the decisions already made earlier in the same pass, and successive passes alternate between forward and reverse raster order, as in~\cite{rosenfeld1966,vincent1993}. Two feedback paths are required, and both are inexpensive in a raster pipeline:
\begin{itemize}
\item \textbf{Left neighbour.} The decision for pixel $x-1$ leaves the operator's second stage one clock before that of pixel $x$. Feeding this registered output back into the second stage (\texttt{fb\_en}) promotes a weak pixel whose left neighbour has just been promoted. The loop is one clock long, and the operator's latency remains two clocks.
\item \textbf{Row above.} The window keeps the two previous rows in line buffers. In feedback mode, the row-above buffer is written with the pass's own decisions rather than with its input, so each row sees the updated row above it. Each decision is written back about $W-6$ samples before it is next needed.
\end{itemize}
A reverse pass reads addresses $N-1,\dots,0$ and maps write addresses back through $N-1-a$. Because promotion and the zero border are symmetric under a 180-degree rotation, the window and operators need no change. Frames still alternate between the two hysteresis buffers, so no pass reads and writes the same RAM, and the termination rule (stop after a pass that promotes nothing; cap at 256 passes) is unchanged. Fig.~\ref{fig:inplace} shows both feedback paths.

\begin{figure}[t]
\centering
\resizebox{\columnwidth}{!}{%
\begin{tikzpicture}[font=\scriptsize, >=Stealth,
  box/.style={draw, minimum height=5mm, align=center, inner sep=2pt},
  lb/.style={box, fill=gray!12, minimum width=20mm},
  st/.style={box, fill=blue!8, minimum width=15mm, minimum height=7mm}]
\node[lb] (lb1) {line buffer: row $y\!-\!1$};
\node[lb, above=2mm of lb1] (lb0) {line buffer: row $y\!-\!2$};
\node[box, right=4mm of lb1, minimum width=10mm, minimum height=17mm, fill=gray!12, yshift=3.5mm] (taps) {$3\times3$\\taps};
\node[st, right=5mm of taps, yshift=4mm] (s1) {stage 1\\[-1pt]{\tiny weak $\wedge$ strong nbr}};
\node[st, right=4mm of s1] (s2) {stage 2\\[-1pt]{\tiny OR left fb}};
\node[left=4mm of lb1, align=center] (in) {pixel\\stream};
\draw[->] (in) -- (lb1);
\draw[->] (in.north) |- (taps.north west);
\draw[->] (lb1) -- (lb1 -| taps.west);
\draw[->] (lb0) -- (lb0 -| taps.west);
\draw[->] (taps.east |- s1) -- (s1);
\draw[->] (s1) -- (s2);
\draw[->] (s2.east) -- ++(4mm,0) node[right] {out};
\draw[->, thick, red!70!black] ([xshift=2mm]s2.south) -- ++(0,-3mm) node[right] {left neighbour (1 clk)} -| ([xshift=-2mm]s2.south);
\draw[->, thick, red!70!black] ([xshift=3mm]s2.east) |- ([yshift=6mm]lb0.north) node[pos=0.75, above] {row-above write-back (\texttt{fb\_en})} -- (lb0.north);
\node[draw, dashed, rounded corners, fit=(s1)(s2), inner sep=4mm, label={[font=\scriptsize, anchor=north west]south west:\texttt{op\_hyst}: 2-clock latency}] {};
\end{tikzpicture}}
\caption{In-place hysteresis datapath. With \texttt{fb\_en}=1 the operator's own output is fed back into its second stage (left neighbour) and written into the row-above line buffer instead of the shifted input. Both loops are inside the existing pipeline; the operator's latency is unchanged.}
\label{fig:inplace}
\end{figure}

\subsection{Fast path}
The fast path chains the same operators with a streaming window between stages: camera, grayscale conversion, Gaussian, Sobel, NMS, threshold and $K$ local hysteresis stages, each a single plain (\texttt{fb\_en}$=0$) hysteresis iteration. The streaming window tolerates gaps in its input (the camera delivers a pixel every few clocks), supports widths that are not powers of two, and holds its two line buffers in one block RAM of $640\times2$ words. Because NMS needs magnitude and direction together, its window is 10 bits wide. Every stage zeroes the frame border, exactly as in the exact path. The output is a 1-bit $640\times480$ edge map in block RAM, scanned out to VGA at 60~Hz.

Each window delays the stream by one line and one pixel, so a decision emerges about $3+K$ lines after its pixel arrives. After the last pixel of a frame, $(3+K)(W+1)$ zero samples are injected at full clock rate during vertical blanking to flush the pipeline. The camera's pixel-clock domain is crossed through a Gray-code FIFO, and a frame controller resets the pipeline at the start of each frame and pads short frames. Grayscale is computed as $(77R+150G+29B)\gg8$ after expanding RGB565 to 8 bits per channel.

The accuracy of the fast path is bounded from one side:

\begin{proposition}[No false edges]\label{prop:subset}
Let $F_K$ be the edge set produced by the fast path with $K$ local stages, and let $E_k\subseteq E$ be the weak pixels whose shortest path from $S$ through $W$ has at most $k$ steps. Then $S\cup E_K\subseteq F_K\subseteq S\cup E$.
\end{proposition}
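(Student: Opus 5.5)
The plan is to model the fast path as $K$ applications of a single synchronous update. Let $T_0=S$ denote the strong set produced by \texttt{thresh}, so that the weak set is $W$. One plain hysteresis stage (\texttt{fb\_en}$=0$) is the map
\[
\Phi(T)=T\cup\{p\in W : N_8(p)\cap T\neq\emptyset\},
\]
and $F_K=T_K=\Phi^K(S)$. The map $\Phi$ also carries the current weak set forward, as $W\setminus T$. Because every stage zeroes the frame border, the one-pixel border ring is never strong and never weak after \texttt{thresh}, so it cannot affect any membership. Before anything else I will confirm this, and also that the streaming window computes exactly this update. This reduces to the fact that each stage reads only its own input stream, and not its own output; that is the meaning of \texttt{fb\_en}$=0$.

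For the upper bound $F_K\subseteq S\cup E$ I will reuse the first half of the argument in Proposition~\ref{prop:order}. That half relies only on condition (i). Each stage of $\Phi$ promotes a weak pixel only when some 8-neighbour is already in $T$, so the stages form a procedure satisfying (i), and induction on the stage index gives $T_k\subseteq S\cup E$ for every $k$. Stated directly: if $T_{k-1}\subseteq S\cup E$ and $p\in W$ is adjacent to $q\in T_{k-1}$, then either $q\in S$, or $q$ is joined to $S$ by a path through $W$. Appending $p$ to that path puts $p\in E$. Condition (ii) is not needed, because we make no claim that the fast path terminates at the fixed point.

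For the lower bound $S\cup E_K\subseteq F_K$ I will prove the stronger invariant $S\cup E_k\subseteq T_k$ by induction on $k$. The base case is $E_0=\emptyset$. Take $p\in E_k$ and a shortest path $s=p_0,p_1,\dots,p_m=p$ with $m\le k$, where $s\in S$ and $p_1,\dots,p_m\in W$. Then $p_{m-1}$ either lies in $S$ or satisfies $p_{m-1}\in E_{m-1}\subseteq E_{k-1}$. Here we use that a prefix of a shortest path is a path of length $m-1$. By the inductive hypothesis, $p_{m-1}\in T_{k-1}$. Since $p\in W$ and $p$ is adjacent to $p_{m-1}$, we get $p\in\Phi(T_{k-1})=T_k$.

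I expect the main obstacle to be the modelling step rather than the set algebra. I must check that a pixel promoted in stage $j$ really remains "strong" (255) in the input of stage $j+1$. I must also check that \texttt{thresh}'s class codes reach every hysteresis stage unchanged, apart from promotions, so that $W$ is the same set throughout. Both facts should follow from the description of \texttt{hyst}: it rewrites only a weak centre to strong and passes all other values through. The border zeroing at each stage also has to be handled. I will argue that the border ring is already 0 after \texttt{thresh}, so re-zeroing it changes nothing. Once these points hold, the two inclusions above give the proposition.
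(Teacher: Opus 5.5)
Your proposal is correct and follows the same route as the paper: the upper bound comes from the condition-(i) induction of Proposition~\ref{prop:order}, without needing (ii), and the lower bound comes from the fact that each plain stage reads only the previous stage's output and promotes every weak pixel with a strong neighbour there. Your explicit invariant $S\cup E_k\subseteq T_k$, proved with the shortest-path-prefix step, spells out what the paper states in one line, and your border-zeroing check adds a detail the paper leaves implicit.
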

\begin{IEEEproof}
Every local stage satisfies condition (i) of Proposition~\ref{prop:order}, so the induction in that proof shows that every pixel the fast path promotes lies in $E$; weak pixels left at the end are discarded. This gives the upper bound. For the lower bound, each local stage reads only the previous stage's output and promotes every weak pixel that has a strong neighbour there, so after stage $k$ all pixels of $E_k$ are strong.
\end{IEEEproof}

The fast path therefore never reports an edge that the exact path rejects; it can only miss weak pixels lying more than $K$ steps from a strong pixel. Because its error is one-sided, its precision relative to the exact result is 100\,\% by construction, and recall is the appropriate measure of its accuracy.

\section{Verification Method}\label{sec:verif}
A one-cycle misalignment shifts the image by a pixel yet still yields a plausible-looking edge map. Every check is therefore a byte-level comparison against a software model, never a visual inspection.
\begin{enumerate}
\item \textbf{Golden models.} \texttt{canny\_model.py} mirrors the exact path operation for operation (shifts, truncation, tangent constants, zeroed borders, pass counting and cap) in both hysteresis modes; \texttt{fast\_model.py} does the same for the fast path for every $K$, including the RGB565-to-gray conversion. A C++ port of the exact model is byte-identical to it on all our test images and serves as the CPU baseline.
\item \textbf{RTL regression.} Icarus Verilog testbenches dump every intermediate stage. The exact path is exercised on \nTileCases{} tiles in both modes, with a lowered pass cap in both modes, and through the complete UART top level including the status command. The fast path is exercised on \nFastFrames{} frames of $640\times480$ for $K=1,2,4,8$ (\nFastRuns{} runs), including an RGB565 frame delivered with camera-like gaps. A continuous-integration job runs both regressions on every commit.
\item \textbf{System simulation.} The complete fast-path top level (capture, FIFO, frame controller, fast core and frame buffer) is simulated in Vivado's simulator against a behavioural OV7670 model that follows the datasheet's VGA RGB565 timing~\cite{ov7670}.
\item \textbf{Implementation.} Both designs are placed and routed in Vivado 2024.2 for the XC7A35T-1 and are required to close timing at 50~MHz.
\item \textbf{Hardware.} For the exact path, a link self-test first echoes a tile unchanged, and per-stage dumps then localise any mismatch. On the full $480\times720$ test image, 0 of 345,600 output pixels differed from the model on the board (two-buffer mode).
\end{enumerate}

\section{Results}\label{sec:results}
Results labelled RTL simulation come from cycle-accurate simulation of the Verilog, with cycle counts converted to time at 50~MHz. Fast-path latency comes from the system simulation with the camera model, resources and timing from Vivado place and route, and CPU times from measurement; the full-image result of Section~\ref{sec:verif} was obtained on the board.

\subsection{Exact path and in-place hysteresis}
Every stage of the exact path matches the model on all \nTileCases{} tiles in both modes, and the pass counts of RTL and model agree. Table~\ref{tab:inplace} quantifies the effect of in-place sweeps. On the pattern and photo tiles, the number of passes falls from \structTwoBufMin--\structTwoBufMax{} to \structInplaceMin--\structInplaceMax, and the hardest photo tile drops from \photoMidTwoBufMs{} to \photoMidInplaceMs~ms. Since every hysteresis pass costs the same number of cycles, the hysteresis step itself becomes \hystSpeedMin--\hystSpeedMax$\times$ faster on these tiles (\photoMidHystSpeed$\times$ on the hardest photo tile), and a complete tile, including the four fixed stages, becomes \tileSpeedMin--\tileSpeedMax$\times$ faster (last column of Table~\ref{tab:inplace}). Compute time is now dominated by the four fixed stages rather than by hysteresis. The serpentine tile, a single weak chain winding through the whole tile, is the worst case for two-buffer iteration (\snakeTwoBuf{} passes) yet needs only \snakeInplace{} in place: \hystSpeedWorst$\times$ fewer passes and a \tileSpeedWorst$\times$ faster tile. Gains are modest only where chains are short to begin with (noise, flat image). Fig.~\ref{fig:convergence} explains the high two-buffer pass count: most edges are found within the first few passes, but each long weak chain still grows by one pixel per pass, and this tail determines the number of passes. In-place sweeps eliminate the tail.

\begin{figure}[!htb]
\centering
\includegraphics[width=\columnwidth]{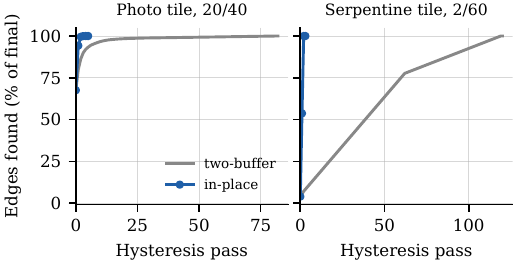}
\caption{Edges found after each hysteresis pass, as a fraction of the final edge set (model, bit-exact with the RTL). Both modes reach the same final set.}
\label{fig:convergence}
\end{figure} The same algorithmic change also pays off on a CPU: the single-threaded C++ port runs \refSpeedupLow$\times$ faster at 20/40 and \refSpeedupHigh$\times$ faster at 50/100 (Table~\ref{tab:cpu}).

\begin{table}[!htb]
\caption{Hysteresis passes and compute per $128\times128$ tile at 50~MHz, two-buffer vs.\ in-place (RTL simulation; outputs identical to the model in both modes)}
\label{tab:inplace}
\centering\footnotesize
\resizebox{\columnwidth}{!}{\input{generated/tab_inplace.tex}}
\end{table}

\subsection{Fast-path accuracy}
The fast-path edge map equals the model's for every frame and every $K$, and no run produced a single false edge. Table~\ref{tab:recall} and Fig.~\ref{fig:recall} report recall against exact hysteresis. Recall increases with $K$ and is higher at low thresholds, where strong pixels are denser and weak pixels lie closer to them: at 20/40 it rises from \recallPhotoKOne\,\% ($K=1$) to \recallPhotoKEight\,\% ($K=8$), and at 50/100 from \recallHighKOne\,\% to \recallHighKEight\,\%. Fig.~\ref{fig:example} shows what is missed: the long weak chains that only the exact path completes.

\begin{table}[!htb]
\caption{Fast-path recall of exact edge pixels (\%), RTL simulation. Precision is 100\,\% in every case.}
\label{tab:recall}
\centering\footnotesize
\input{generated/tab_recall.tex}
\end{table}

\begin{figure}[!htb]
\centering
\includegraphics[width=\columnwidth]{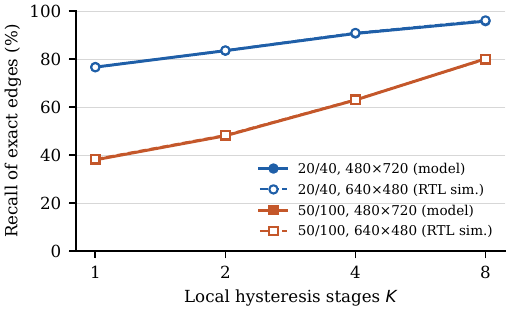}
\caption{Fast-path recall vs.\ number of local hysteresis stages $K$. Solid: whole $480\times720$ image (model). Dashed: $640\times480$ frame from the same photograph through the RTL; the two nearly coincide.}
\label{fig:recall}
\end{figure}

\begin{figure}[!htb]
\centering
\includegraphics[width=\columnwidth]{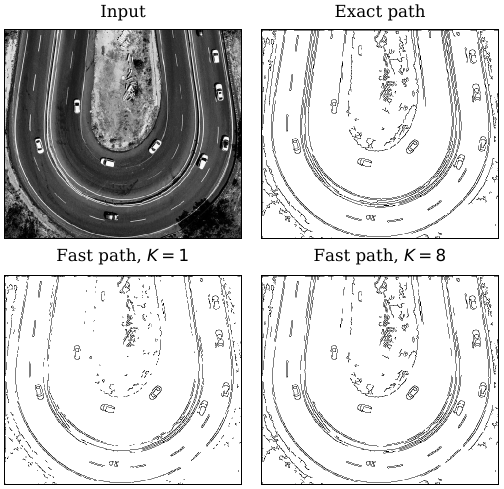}
\caption{Aerial photograph of a hairpin road ($500\times440$ crop, grayscale, thresholds 50/100), processed by the bit-exact models of both paths. The fast path reports only edges the exact path also reports. With $K=1$ the long, low-contrast lane markings around the bend break into fragments; with $K=8$ most of each curve is recovered.}
\label{fig:example}
\end{figure}

\subsection{Accuracy on a standard benchmark}
Recall against our own exact path measures how much the fast path loses, not how good either path is. We therefore also score both paths on the \bsdsN{} test images of the BSDS500 boundary benchmark~\cite{martin2004,arbelaez2011}, using the bit-exact models, so the scores hold for the RTL. Each predicted boundary pixel is matched one-to-one to human-annotated boundaries within 0.75\,\% of the image diagonal, the benchmark's tolerance; we re-implement this matching rule in Python rather than run the original benchmark code, and apply it identically to every detector. Thresholds are swept with $t_\mathrm{low}=t_\mathrm{high}/2$; ODS is the F-measure at the best common threshold and OIS at the best threshold per image. As Table~\ref{tab:bsds} shows, the exact path achieves ODS~F~=~\bsdsExactF, and OpenCV's Canny, which uses a different gradient scale and arithmetic, \bsdsCvF. The fast path scores between \bsdsFastMinF{} and \bsdsFastMaxF{} for every $K$ (\bsdsFastOneF{} with $K=1$, \bsdsFastEightF{} with $K=8$), retaining at least \bsdsFastMinPctOfExact\,\% of the exact path's score. The differences are not monotonic in $K$ because each detector is scored at its own best threshold, where a detector that drops weak chains can afford a lower $t_\mathrm{high}$. On this benchmark, the edge pixels missed by local hysteresis therefore do not measurably reduce agreement with human-drawn boundaries. Both paths score below OpenCV, whose gradient arithmetic differs from ours; matching its accuracy is not a goal of this design.

\begin{table}[!htb]
\caption{BSDS500 test set (\bsdsN{} images), boundary precision (P), recall (R) and F-measure (model, bit-exact with the RTL). $^{*}$OpenCV threshold on $|G_x|+|G_y|$, i.e.\ $4\times$ the models' scale.}
\label{tab:bsds}
\centering\footnotesize
\input{generated/tab_bsds.tex}
\end{table}

\subsection{Fast-path latency}
Table~\ref{tab:latency} reports the time from the camera delivering the second byte of the marker pixel (row 240, column 320) to its edge decision leaving the fast core, measured in the system simulation with the camera's datasheet timing (24~MHz pixel clock, one line every \lineUs~$\mu$s). The latency equals $3+K$ camera lines to within a few hundredths of a line, exactly as the architecture predicts: \latKOne~$\mu$s for $K=1$ and \latKEight~$\mu$s for $K=8$. Because the pipeline contains no data-dependent control, the latency is independent of image content.

\begin{table}[!htb]
\caption{Fast-path latency, camera pixel to edge decision (system simulation, OV7670 VGA timing)}
\label{tab:latency}
\centering\footnotesize
\input{generated/tab_latency.tex}
\end{table}

\subsection{Resources}
Table~\ref{tab:resources} lists post-route utilisation. Both designs close timing at 50~MHz with positive slack. The exact path is dominated by its eight tile buffers (\exactBram{} of \bramTotal{} BRAM36). The fast path at $K=8$, including camera capture, FIFO, frame buffer and VGA output, occupies \fastLut{} of \lutTotal{} LUTs, and each stage costs one $640\times2$-word line buffer. Vivado's post-route power estimate is \exactPowerEst~W for the exact path and \fastPowerEst~W for the fast path (default switching activity; an estimate, not a measurement).

\begin{table}[!htb]
\caption{Post-route utilisation on XC7A35T-1 at 50~MHz (Vivado 2024.2)}
\label{tab:resources}
\centering\footnotesize
\resizebox{\columnwidth}{!}{\input{generated/tab_resources.tex}}
\end{table}

\subsection{Comparison with a CPU}\label{sec:cpu}
Table~\ref{tab:cpu} gives single-image times on a \cpuModel{} laptop CPU (\cpuRuns{} timed runs after warm-up, $480\times720$ image, Linux subsystem on Windows, \texttt{g++ -O3 -march=native}; on mains power, minimum processor state 100\,\%, single-thread runs pinned to one core). OpenCV's Canny with 16 threads takes \cvAllMed~ms; on stored images, the CPU is clearly faster than our 50~MHz pipelines. The picture reverses when the image comes from a camera. A frame-based CPU program cannot start until the frame has arrived, which takes \framePeriodMs~ms at 30 frames per second: the first pixel of a frame waits that long before any processing begins, and only the last pixel experiences a latency close to the processing time. The fast path decides \emph{every} pixel \latKOneMs--\latKEightMs~ms after it arrives, with no operating system in the loop. For the first pixels of a frame, this is \soonerKEight--\soonerKOne$\times$ sooner than a frame-based program can even start (derived: \framePeriodMs~ms divided by the measured latency). This comparison is with frame-based software; a CPU or microcontroller program written to process the image line by line could also start early, and what the pipeline adds is a latency fixed by its structure rather than by scheduling. The exact path offers the complementary trade-off: full hysteresis, bit-exact with a reference model, at a fixed and analysable cost per tile.

\begin{table}[!htb]
\caption{CPU time per $480\times720$ image (\cpuModel)}
\label{tab:cpu}
\centering\footnotesize
\resizebox{\columnwidth}{!}{\input{generated/tab_cpu.tex}}
\end{table}

\section{Discussion and Limitations}
\textbf{Latency, not throughput.} At one pixel per clock and 50~MHz, neither path outperforms an optimised CPU on stored frames, and we do not claim otherwise. The fast path's value is a decision within a few lines of the pixel, at a latency fixed by the pipeline; the exact path's value is exact hysteresis that can be verified byte for byte.

\textbf{Two answers from one library.} The fast path delivers an early, conservative answer that never contains a false edge; the exact path completes the weak chains later. Because both paths share the same operators and the same models, the relation between their answers is known exactly: the fast result is always a subset of the exact one.

\textbf{Limitations.} The exact path processes $128\times128$ tiles because larger buffers do not fit in the device's block RAM; this causes the tile seams quantified above, which external memory would remove. Tiles travel over a UART that is far slower than the computation, so the exact path serves as a verification and refinement engine rather than a video-rate one. The fast-path results derive from RTL and system simulation with a datasheet camera model and from Vivado implementation, and the in-place mode of the exact path has been verified in RTL simulation; confirming both on the board, together with measured power, remains future work. The CPU results cover a single laptop processor. Finally, a single global threshold pair cannot suit both dense texture and soft edges, a limitation inherent to Canny itself.

\section{Conclusion}
A single library of hardwired $3\times3$ operators, each with a two-clock latency, serves both a streaming fast path and an exact frame-buffered path for Canny edge detection on a low-cost Artix-7. Built into that library, in-place forward and reverse sweeps reduce exact hysteresis from up to \structTwoBufMax{} to at most \structInplaceMax{} passes on photo and pattern tiles with identical output, accelerating the hysteresis step \hystSpeedMin--\hystSpeedMax$\times$ and complete tiles \tileSpeedMin--\tileSpeedMax$\times$ (\tileSpeedWorst$\times$ on a worst-case tile). The fast path commits each edge decision \latKOne--\latKEight~$\mu$s after the camera delivers the pixel, never reports a false edge, recovers up to \recallPhotoKEight\,\% of exact edges, and retains at least \bsdsFastMinPctOfExact\,\% of the exact path's boundary accuracy on BSDS500. For systems that must act on a camera stream before the frame is complete, this provides early, provably conservative answers and exact refinement from the same verified hardware. All results trace to scripts and raw logs in the accompanying repository.

\bibliographystyle{IEEEtran}
\bibliography{refs}

\end{document}

%% file: generated/numbers.tex
\newcommand{\bramTotal}{50}
\newcommand{\bsdsCvF}{0.508}
\newcommand{\bsdsExactF}{0.476}

\newcommand{\bsdsFastEightF}{0.470}

\newcommand{\bsdsFastMaxF}{0.477}
\newcommand{\bsdsFastMinF}{0.470}
\newcommand{\bsdsFastMinPctOfExact}{98.6}
\newcommand{\bsdsFastOneF}{0.477}

\newcommand{\bsdsN}{200}
\newcommand{\cpuModel}{12th Gen Intel Core i5-12500H}
\newcommand{\cpuRuns}{1000}
\newcommand{\cvAllMed}{0.603}

\newcommand{\exactBram}{29}

\newcommand{\exactPowerEst}{0.108}

\newcommand{\fastLut}{1455}
\newcommand{\fastPowerEst}{0.300}

\newcommand{\framePeriodMs}{33.3}
\newcommand{\hystSpeedMax}{21.7}
\newcommand{\hystSpeedMin}{6.6}
\newcommand{\hystSpeedWorst}{40}

\newcommand{\latKEight}{720.63}
\newcommand{\latKEightMs}{0.72}
\newcommand{\latKOne}{262.15}
\newcommand{\latKOneMs}{0.26}
\newcommand{\lineUs}{65.33}
\newcommand{\lutTotal}{20800}
\newcommand{\nFastFrames}{5}
\newcommand{\nFastRuns}{20}
\newcommand{\nTileCases}{8}
\newcommand{\photoMidHystSpeed}{16.4}

\newcommand{\photoMidInplaceMs}{2.97}

\newcommand{\photoMidTwoBufMs}{28.42}

\newcommand{\recallHighKEight}{79.9}
\newcommand{\recallHighKOne}{38.3}

\newcommand{\recallPhotoKEight}{96.1}
\newcommand{\recallPhotoKOne}{76.6}

\newcommand{\refSpeedupHigh}{4.4}
\newcommand{\refSpeedupLow}{2.4}

\newcommand{\seamPctHigh}{0.815}
\newcommand{\seamPctLow}{0.765}
\newcommand{\snakeInplace}{3}
\newcommand{\snakeTwoBuf}{120}
\newcommand{\soonerKEight}{46}
\newcommand{\soonerKOne}{127}
\newcommand{\structInplaceMax}{7}
\newcommand{\structInplaceMin}{3}
\newcommand{\structTwoBufMax}{82}
\newcommand{\structTwoBufMin}{38}
\newcommand{\tileSpeedMax}{9.9}
\newcommand{\tileSpeedMin}{4.5}
\newcommand{\tileSpeedWorst}{17.7}

\newcommand{\twoBufRealMax}{82}

%% file: generated/tab_inplace.tex
\begin{tabular}{@{}llrrrrr@{}}
\toprule
 & & \multicolumn{2}{c}{Hysteresis passes} & \multicolumn{3}{c}{Compute per tile (ms)} \\
\cmidrule(lr){3-4}\cmidrule(l){5-7}
Tile ($128\times128$) & $T_L/T_H$ & Two-buf. & In-place & Two-buf. & In-place & Speed-up \\
\midrule
Synthetic pattern & 25/60 & 38 & 3 & 13.88 & 2.31 & 6.0$\times$ \\
Synthetic pattern & 50/100 & 65 & 3 & 22.80 & 2.31 & 9.9$\times$ \\
Photo, top-left & 20/40 & 46 & 7 & 16.52 & 3.63 & 4.5$\times$ \\
Photo, middle & 20/40 & 82 & 5 & 28.42 & 2.97 & 9.6$\times$ \\
Photo, middle & 50/100 & 65 & 4 & 22.80 & 2.64 & 8.6$\times$ \\
Uniform noise & 20/40 & 5 & 4 & 2.97 & 2.64 & 1.1$\times$ \\
Flat white & 20/40 & 1 & 1 & 1.65 & 1.65 & 1.0$\times$ \\
Serpentine chain & 2/60 & 120 & 3 & 40.97 & 2.31 & 17.7$\times$ \\
\bottomrule
\end{tabular}

%% file: generated/tab_recall.tex
\begin{tabular}{@{}lrrrr@{}}
\toprule
Frame ($640\times480$) & $K=1$ & $K=2$ & $K=4$ & $K=8$ \\
\midrule
Photo, 20/40 & 76.6 & 83.6 & 90.9 & 96.1 \\
Photo, 50/100 & 38.3 & 48.3 & 63.1 & 79.9 \\
Camera RGB565, 20/40 & 77.7 & 84.5 & 91.6 & 96.6 \\
\bottomrule
\end{tabular}

%% file: generated/tab_bsds.tex
\begin{tabular}{@{}lrrrrr@{}}
\toprule
Detector & $t_\mathrm{high}$ & P & R & ODS F & OIS F \\
\midrule
Exact path (converged) & 64 & 0.360 & 0.704 & 0.476 & 0.510 \\
Fast path, $K=1$ & 50 & 0.360 & 0.706 & 0.477 & 0.512 \\
Fast path, $K=2$ & 64 & 0.385 & 0.608 & 0.472 & 0.507 \\
Fast path, $K=4$ & 64 & 0.371 & 0.639 & 0.470 & 0.504 \\
Fast path, $K=8$ & 64 & 0.363 & 0.668 & 0.470 & 0.504 \\
OpenCV Canny (L1) & 256$^{*}$ & 0.405 & 0.682 & 0.508 & 0.556 \\
\bottomrule
\end{tabular}

%% file: generated/tab_latency.tex
\begin{tabular}{@{}lrrr@{}}
\toprule
$K$ & Latency ($\mu$s) & Camera lines & Expected $3+K$ \\
\midrule
1 & 262.15 & 4.01 & 4 \\
2 & 327.65 & 5.01 & 5 \\
4 & 458.65 & 7.02 & 7 \\
8 & 720.63 & 11.03 & 11 \\
\bottomrule
\end{tabular}

%% file: generated/tab_resources.tex
\begin{tabular}{@{}lrrrrrr@{}}
\toprule
Design & LUT & FF & BRAM36 & BRAM18 & DSP & WNS (ns) \\
\midrule
Exact path (UART, $128\times128$ tiles) & 720 & 660 & 29 & 0 & 2 & 8.408 \\
Fast path, $K=8$ (camera to VGA) & 1455 & 2201 & 11 & 11 & 2 & 7.927 \\
\bottomrule
\end{tabular}

%% file: generated/tab_cpu.tex
\begin{tabular}{@{}lrrrrr@{}}
\toprule
 & & \multicolumn{2}{c}{20/40 (ms)} & \multicolumn{2}{c}{50/100 (ms)} \\
\cmidrule(lr){3-4}\cmidrule(l){5-6}
Implementation & Thr. & median & p99 & median & p99 \\
\midrule
C++ port, two-buffer & 1 & 19.49 & 25.32 & 28.01 & 33.72 \\
C++ port, in-place & 1 & 8.23 & 9.88 & 6.32 & 8.79 \\
OpenCV & 1 & 3.88 & 4.92 & 3.00 & 3.95 \\
OpenCV & 16 & 0.60 & 0.87 & 0.51 & 0.71 \\
\bottomrule
\end{tabular}